\theoremstyle{plain}
\newtheorem{theorem}{Theorem}
\newtheorem{remark}{Remark}
\newtheorem{lemma}{Lemma}
\newcommand{\be}{\begin{equation}}
\newcommand{\ee}{\end{equation}}
\newcommand{\bey}{\begin{eqnarray}}
\newcommand{\eey}{\end{eqnarray}}
\newcommand{\donothing}[1]{}
\newcommand{\R}{\mathbb{R}}
\title{Quantum confinement in $\alpha$-Grushin planes}
\begin{document}

\maketitle

\begin{center}
\author{Eugenio Pozzoli\footnote{Inria, Sorbonne Universit{\'e}, Universit{\'e} de Paris, CNRS, Laboratoire Jacques- Louis Lions, Paris, France (e-mail: eugenio.pozzoli@inria.fr)}}\;\;
\end{center}

\begin{abstract}
We consider here a family of singular Laplace-Beltrami operators, focussing our attention on the problem of so-called quantum confinement on the half-plane equipped with Riemannian metrics of Grushin type degenerate at the boundary. By introducing a costant-fiber direct integral scheme we are able to rigorously characterize the presence or absence of self-adjointness of these operators. We also compare our technique and results with the already studied problem of quantum confinement on the half-cylinder. 
\end{abstract}

\section*{Introduction}
The quantum evolution on a Riemannian oriented manifold $(M,g,\omega)$ can be studied via the Schrödinger equation

\begin{equation}\label{Schr}
i\dfrac{d}{dt}\psi=-\Delta_{g,\omega}\psi+V\psi, \qquad \psi(t)\in L^2(M,\omega).
\end{equation}

Several objects are involved in (\ref{Schr}): $g$ is a Riemannian metric on $M$, that is, a symmetric, positive definite $(0,2)$-tensor, $\omega$ is a volume form on $M$, that is, a non vanishing top differential form on $M$, $V$ is a real-valued function on $M$ (the potential) and $\Delta_{g,\omega}:= \text{div}_\omega \circ \text{grad}_g$ is a Laplace operator which acts on the Hilbert space $L^2(M,\omega)$ with domain $C_c^\infty(M)$, the compactly supported smooth functions on $M$, and which depends on both the metric and the volume.\\
Here we consider a geometric Laplacian, that is called the Laplace-Beltrami operator of $(M,g)$, which corresponds to the choice of the Riemannian volume form $\omega_g$. In this case, one can write $\Delta_{g,\omega_g}=:\Delta_g$. For a general introduction to the Laplace-Beltrami operator and geometric analysis see, e.g., \cite{Jost-RiemannGeom_GeomAnalysis}. \\
If the Schrödinger operator $H:=-\Delta_g+V$ is essentially self-adjoint on its domain $C_c^\infty(M)$, then the evolution is uniquely described by the propagator $e^{-it\overline{H}}$, $t \in \R$, where $\overline{H}$ is the closure of $H$, and also the unique self-adjoint extension of $H$.\\
If $H$ is not essentially self-adjoint on $C_c^\infty(M)$, it admits self-adjoint extensions since it is a real operator, i.e., it commutes with complex conjugation (\cite[Theorem X.3]{rs2}). In this case the extension is not unique and thus the quantum dynamics are not uniquely determined by $\Delta_g$ and $V$. In some sense, there is no natural quantum evolution. In what follows we will restrict ourselves to the case $V=0$.\\
For example, one can consider the usual euclidean Laplacian $\Delta=\sum_{j=1}^n \dfrac{\partial^2}{\partial x_j^2}$ on an open domain $A\subset \R^n$, without potential. Here, $\Delta$ is never essentially self-adjoint on $C_c^\infty(A)$, basically because it induces a Sobolev $H_0^2$ norm on the function space where it is defined. Its self-adjoint extensions correspond to the choice of suitable boundary conditions for the functions in the domain of $\Delta$.\\
The situation becomes more complex for domains $A$ of $\R^n$ equipped with non a non euclidean metric $g$. In this case, the geometric Laplacian $\Delta_g$ can be essentially self-adjoint on $C_c^\infty(M)$. This is basically due to some singularity of the Riemannian structure $(A,g)$ at the boundary $\partial A$ of $A$, which behaves like a confining \emph{effective potential}. When this occurs, we say that the particle is naturally \emph{confined} in $A$ during its quantum dynamics, since there exists a unique evolution $e^{-it\overline{\Delta_g}}$ on $L^2(A,\omega_g)$ without need of specifying boundary conditions. This is also called geometric confinement since it is only due to the geometric Laplacian $\Delta_g$ and not to a potential $V$. \\
So, the essential self-adjointness of $\Delta_g$ on $C_c^\infty(M)$ is usually called \emph{geometric quantum confinement} on $M$: no quantum information escapes from $M$. 
For further details, see for example \cite{Boscain-Prandi-JDE-2016,Boscain-Laurent-2013,Franceschi-Prandi-Rizzi-2017,GMP-Grushin-2018}.

\section{Grushin manifolds}
Here we are interested in studying the presence or absence of geometric quantum confinement in the family of Riemannian manifolds $M_\alpha:=(M,g_\alpha)$, $\alpha \in \R$,
$$M:=\{(x,y)\in \R^2| x>0\}=\R^+\times \R, \quad g_\alpha:=dx^2+\dfrac{1}{x^{2\alpha}}dy^2=\begin{pmatrix}
1 & 0 \\
0 &  1/x^{2\alpha}
\end{pmatrix}, $$
that is, the half-plane equipped with a family of Riemannian metrics $g_\alpha$.

The curvature of $M_\alpha$ is given by $K_\alpha(x)=-\dfrac{\alpha(\alpha+1)}{x^2}$, thus the Riemannian structure is degenerate at the boundary $\partial M$, that is, the $y$-axis, for every $\alpha$ except for $\alpha=0,1$. \\
The Riemannian volume form of $M_\alpha$ reads 
$$\omega_\alpha=\sqrt{\text{det}g_\alpha}dx\wedge dy=\dfrac{1}{x^\alpha}dx\wedge dy,$$
and the Laplace-Beltrami operator of $M_\alpha$ is
\begin{equation}\label{laplace}
\Delta_\alpha=\dfrac{\partial^2}{\partial x^2}+ x^{2\alpha}\dfrac{\partial^2}{\partial y^2}-\dfrac{\alpha}{x}\dfrac{\partial}{\partial x},
\end{equation}
acting on the Hilbert space $L^2\Big(\R^+\times \R,\dfrac{1}{x^\alpha}dxdy\Big)$, with domain $C_c^\infty(\R^+\times \R)$, the smooth functions on the half-plane compactly supported away from the $y$-axis.
In the work \cite{GMP-Grushin-2018} the essential self-adjointness of the operators (\ref{laplace}) is studied depending on $\alpha \in [0,\infty)$. These operators where first studied as examples of hypo-elliptic operators, and these manifolds find applications in sub-Riemannian geometry since they provide examples of quantum confinement in almost-Riemannian structure. For further details on almost-Riemannian and, more generally, sub-Riemannian structures, see, e.g., \cite{Montgomery-Subriemannian-2002,Agrachev-Barilari-Boscain-subriemannian}.  \\

Let us briefly recall how the self-adjointness of the operators (\ref{laplace}) can be analyse.\\
First of all, the operator $\Delta_\alpha$ is unitarily equivalent to
\begin{equation}\label{equivalent}
A_\alpha:= \dfrac{\partial^2}{\partial x^2}-\xi^2x^{2\alpha}-\dfrac{\alpha(2+\alpha)}{4x^2}
\end{equation}
 via two unitary tranformations:
$$U_\alpha:L^2\Big(\R^+\times \R,\dfrac{1}{x^\alpha}dxdy\Big) \Rightarrow  L^2(\R^+\times \R,dxdy),\quad \psi \mapsto \dfrac{1}{x^{\alpha/2}}\psi,$$
and 
$$\mathcal{F}_2:  L^2(\R^+\times \R,dxdy)\Rightarrow  L^2(\R^+\times \R,dxd\xi),$$
which is the Fourier tranform in the $y$ variable, that is,

$$\mathcal{F}_2(\psi)(x,\xi):=\dfrac{1}{\sqrt{2\pi}}\int_\R \psi(x,y)e^{-i\xi y}dy. $$

The domain of $A_\alpha$ is $\mathcal{D}(A_\alpha)=\mathcal{F}_2C_c^\infty(\R^+\times \R)$, that is, if $\psi \in \mathcal{D}(A_\alpha)$, then $\psi(\cdot,\xi)$ is compactly supported on $\R^+$, for every $\xi$, while $\psi(x,\cdot)$ is a Schwartz function on $\R$, for every $x$. This transformation is convenient since it allow us to decompose the action of the operator $A_\alpha$ in fiber actions. 

\section{Constant-fiber direct integral scheme}
More precisely, we identify the Hilbert spaces

$$L^2(\R^+\times \R,dxd\xi)\cong L^2(\R,d\xi;L^2(\R^+,dx))=: \int_\R^\oplus L^2(\R^+,dx)d\xi=:\mathcal{H}$$

thus we think of $L^2(\R^+\times \R,dxd\xi)$ as $L^2(\R^+,dx)$-valued square-integrable functions of $\xi \in \R$.

Consider, for any $\xi$, the fiber operator

$$A_\alpha(\xi):=\dfrac{d^2}{dx^2}-\xi^2x^{2\alpha}-\dfrac{\alpha(2+\alpha)}{4x^2}$$

which acts on the fiber space $L^2(\R^+,dx)$ with domain $C_c^\infty(\R^+)$. Then, let us define the operator

\begin{equation}\label{integrale}
B_\alpha:=\int_\R^\oplus \overline{A_\alpha(\xi)}d\xi, 
\end{equation}
whose action is defined by $(B_\alpha\psi)(\xi):=\overline{A_\alpha(\xi)}\psi(\xi)$, for any $\psi \in \mathcal{D}(B_\alpha)$, where the domain is
$$\mathcal{D}(B_\alpha):=\{\psi \in \mathcal{H}| \psi(\xi)\in \mathcal{D}(\overline{A_\alpha(\xi)}) \text{ for a.e. $\xi$, and } \int_\R \| \overline{A_\alpha(\xi)}\psi(\xi) \|_{L^2(\R^+,dx)}^2 d\xi <\infty \}. $$

Having introduced this new operator (\ref{integrale}), let us describe some of its properties and explain its relation with (\ref{equivalent}). 
\begin{theorem}\label{lavoro}
\begin{enumerate}.
\item $B_\alpha$ is a closed symmetric extension of $A_\alpha$.
\item If $A_\alpha(\xi)$ is essentially self-adjoint for a.e. $\xi \in \R$, then $B_\alpha$ is self-adjoint.
\item If $B_\alpha$ is self-adjoint, then $A_\alpha(\xi)$ is essentially self-adjoint for a.e. $\xi \in \R$.
\item If $B_\alpha$ is self-adjoint, then $A_\alpha^* \subset B_\alpha$.
\end{enumerate}
\end{theorem}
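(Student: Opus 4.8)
The plan is to run all four items through the constant-fiber direct-integral calculus, the only non-elementary input being that taking adjoints commutes with the direct integral. The family $\xi\mapsto A_\alpha(\xi)$ is measurable in $\xi$ — indeed analytic, since only the coefficient $\xi^2x^{2\alpha}$ varies and does so polynomially — and this is what licenses the fiberwise manipulations below. Item 1 is then a direct verification. Extension: for $\psi\in\mathcal{D}(A_\alpha)$ each fiber $\psi(\xi)$ lies in $C_c^\infty(\R^+)\subset\mathcal{D}(\overline{A_\alpha(\xi)})$ and $\overline{A_\alpha(\xi)}\psi(\xi)=(A_\alpha\psi)(\xi)$ is square-integrable in $\xi$, so $A_\alpha\subset B_\alpha$. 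Symmetry follows by integrating over $\xi$ the fiberwise symmetry of $\overline{A_\alpha(\xi)}$, which is inherited from the symmetry of the real second-order operator $A_\alpha(\xi)$ on $C_c^\infty(\R^+)$. Closedness comes from fiberwise closedness: if $\psi_n\to\psi$ and $B_\alpha\psi_n\to\chi$ in $\mathcal{H}$, pass to a subsequence with $\psi_n(\xi)\to\psi(\xi)$ and $\overline{A_\alpha(\xi)}\psi_n(\xi)\to\chi(\xi)$ in $L^2(\R^+)$ for a.e. $\xi$; closedness of each $\overline{A_\alpha(\xi)}$ then gives $\psi\in\mathcal{D}(B_\alpha)$ with $B_\alpha\psi=\chi$.

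The engine for items 2 and 3 is the adjoint formula $B_\alpha^*=\int_\R^\oplus A_\alpha(\xi)^*\,d\xi$, valid for direct integrals and combined with the general identity $(\overline{A_\alpha(\xi)})^*=A_\alpha(\xi)^*$. For item 2, essential self-adjointness of $A_\alpha(\xi)$ a.e. means $A_\alpha(\xi)^*=\overline{A_\alpha(\xi)}$ a.e., so $B_\alpha^*=\int_\R^\oplus\overline{A_\alpha(\xi)}\,d\xi=B_\alpha$. For item 3, self-adjointness of $B_\alpha$ yields $\int_\R^\oplus\overline{A_\alpha(\xi)}\,d\xi=\int_\R^\oplus A_\alpha(\xi)^*\,d\xi$, and the essential uniqueness of the fiber decomposition forces $\overline{A_\alpha(\xi)}=A_\alpha(\xi)^*=(\overline{A_\alpha(\xi)})^*$ for a.e. $\xi$, i.e. $A_\alpha(\xi)$ is essentially self-adjoint a.e.

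For item 4 I would prove the sharper statement $A_\alpha^*=B_\alpha^*$; the claim then follows at once, since self-adjointness gives $A_\alpha^*=B_\alpha^*=B_\alpha$. The inclusion $B_\alpha^*\subset A_\alpha^*$ is free from $A_\alpha\subset B_\alpha$. For the reverse, take $\phi\in\mathcal{D}(A_\alpha^*)$ with $A_\alpha^*\phi=\eta$ and test against product functions $f\otimes g$ with $f\in C_c^\infty(\R^+)$ and $g$ ranging over Fourier transforms of $C_c^\infty(\R)$, which indeed lie in $\mathcal{D}(A_\alpha)=\mathcal{F}_2C_c^\infty(\R^+\times\R)$. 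Since $A_\alpha(\xi)$ acts only in $x$, the identity $\langle A_\alpha(f\otimes g),\phi\rangle=\langle f\otimes g,\eta\rangle$ separates in $\xi$, and varying $g$ over a set dense in $L^2(\R)$ forces $\langle A_\alpha(\xi)f,\phi(\xi)\rangle=\langle f,\eta(\xi)\rangle$ for a.e. $\xi$. Running $f$ over a countable dense subset of $C_c^\infty(\R^+)$ and unioning the exceptional null sets yields a single null set off which this holds for all $f$, i.e. $\phi(\xi)\in\mathcal{D}(A_\alpha(\xi)^*)$ with $A_\alpha(\xi)^*\phi(\xi)=\eta(\xi)$; hence $\phi\in\mathcal{D}(B_\alpha^*)$ and $B_\alpha^*\phi=\eta$, giving $A_\alpha^*\subset B_\alpha^*$.

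The main obstacle I anticipate is exactly this single-null-set step: passing from a fiberwise identity valid a.e. for each fixed $f$ to one valid a.e. simultaneously for all $f$ requires care, because $f\mapsto\langle A_\alpha(\xi)f,\phi(\xi)\rangle$ is continuous only in a graph-type topology and not in the $L^2$ norm, so the density of the admissible $f$ must be exploited in the correct sense. The measurability and decomposition-uniqueness inputs underlying the adjoint formula and the argument in items 2 and 3 must likewise be checked for this constant-fiber family rather than merely quoted.
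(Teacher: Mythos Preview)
The paper does not actually prove Theorem~\ref{lavoro}: it cites Reed--Simon \cite[Theorem XIII.85(i)]{rs4} for item~2 and defers items~1,~3,~4 to \cite{GMP-Grushin-2018}. Your proposal is precisely the standard constant-fiber direct-integral argument that underlies those references, so in substance you are reconstructing the cited proofs rather than offering an alternative route.

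A few remarks on the comparison. Your argument for item~1 is correct and self-contained. For items~2 and~3 you rely on the adjoint formula $B_\alpha^*=\int_\R^\oplus A_\alpha(\xi)^*\,d\xi$ and on the a.e.\ uniqueness of fibers; these are exactly the nontrivial inputs, and you are right to flag them in your final paragraph, since the uniqueness step for item~3 genuinely requires a measurable-selection argument (to build, from a positive-measure set of $\xi$ where $\overline{A_\alpha(\xi)}\subsetneq A_\alpha(\xi)^*$, a vector in $\mathcal{D}(B_\alpha^*)\setminus\mathcal{D}(B_\alpha)$). For item~4 you in fact prove the sharper identity $A_\alpha^*=B_\alpha^*$, which is more than the paper states; your tensor-product testing argument is sound, and the concern you raise about passing to a single null set is properly handled by the countable-dense-subset trick, since $f\mapsto A_\alpha(\xi)f$ is continuous from $C_c^\infty(\R^+)$ (with its LF topology) to $L^2(\R^+)$.
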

Property $2$ is well known, \cite[Theorem XIII.85(i)]{rs4}. The rest of Theorem \ref{lavoro} is proved in \cite{GMP-Grushin-2018}.\\
Finally one can deduce if $A_\alpha$ is essentially self-adjoint or not via the self-adjointness of the fiber operators $\overline{A_\alpha(\xi)}$.

\begin{theorem}\label{cruciale}
$A_\alpha$ is essentially self-adjoint if and only if $A_\alpha(\xi)$ is essentially self-adjoint for a.e. $\xi \in \R$.
\end{theorem}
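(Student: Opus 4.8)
The plan is to deduce the equivalence entirely from Theorem \ref{lavoro}, using the direct-integral operator $B_\alpha$ as a bridge between the fiber operators $A_\alpha(\xi)$ and the full operator $A_\alpha$. No new analysis of the differential operators is required: the whole argument is abstract operator theory, and the substantive content (in particular the nontrivial inclusions in items 3 and 4) has already been supplied by Theorem \ref{lavoro}. I would organize the proof around the single intermediate claim that \emph{$A_\alpha$ is essentially self-adjoint if and only if $B_\alpha$ is self-adjoint}, and then attach items 2 and 3 at the two ends.

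For the direction ($\Leftarrow$), suppose $A_\alpha(\xi)$ is essentially self-adjoint for a.e. $\xi$. By item 2, $B_\alpha$ is self-adjoint. By item 1 we have $A_\alpha \subset B_\alpha$, and taking adjoints gives $B_\alpha = B_\alpha^* \subset A_\alpha^*$; combining this with item 4, which asserts $A_\alpha^* \subset B_\alpha$, yields $A_\alpha^* = B_\alpha$. Hence $A_\alpha^*$ is self-adjoint, which is exactly the statement that the symmetric operator $A_\alpha$ is essentially self-adjoint, since then $\overline{A_\alpha} = A_\alpha^{**} = A_\alpha^* = B_\alpha$ is self-adjoint.

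For the direction ($\Rightarrow$), suppose $A_\alpha$ is essentially self-adjoint, so that $\overline{A_\alpha}$ is self-adjoint. Since $B_\alpha$ is closed (item 1) and extends $A_\alpha$, it also extends the closure, so $\overline{A_\alpha} \subset B_\alpha$; but $\overline{A_\alpha}$ is self-adjoint while $B_\alpha$ is symmetric (item 1), and a self-adjoint operator is maximal among symmetric ones, forcing $\overline{A_\alpha} = B_\alpha$. Thus $B_\alpha$ is self-adjoint, and item 3 then gives that $A_\alpha(\xi)$ is essentially self-adjoint for a.e. $\xi$, completing the equivalence.

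The argument carries no real computational obstacle; the only place demanding care is the bookkeeping of adjoints and inclusions — in particular applying the maximality of self-adjoint operators at the right moment, and recognizing that the identity $A_\alpha^* = B_\alpha$ already encodes essential self-adjointness of $A_\alpha$. The genuine difficulty has been isolated into Theorem \ref{lavoro}, whose items 3 and 4 are precisely what make both the ``only if'' implication and the identification $A_\alpha^* = B_\alpha$ possible; without them one would recover only the elementary implication from fiberwise essential self-adjointness to self-adjointness of the total operator.
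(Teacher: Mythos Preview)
Your proof is correct and follows essentially the same approach as the paper: both arguments rest entirely on Theorem \ref{lavoro}, use the inclusion $\overline{A_\alpha}\subset B_\alpha$ together with maximality of self-adjoint operators for one direction, and combine items 2 and 4 (with adjoints) for the other. The only cosmetic difference is that you argue the $(\Rightarrow)$ direction directly, whereas the paper phrases it by contrapositive; the logical content is identical.
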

\begin{proof}
We use the properties established in Theorem \ref{lavoro}. Assume that there exists a set $E \subset \R$ of strictly positive measure such that if $\xi \in E$ then $A_\alpha(\xi)$ is not essentially self-adjoint. Then, $B_\alpha$ is not self-adjoint. Moreover, one always has that $\overline{A_\alpha} \subset B_\alpha$. Now, if $A_\alpha$ was essentially self-adjoint, it could not be $\overline{A_\alpha} = B_\alpha$, because this would violate the lack of self-adjointness of $B_\alpha$. But it could not happen either that $B_\alpha$ is a proper extension of $\overline{A_\alpha}$, because self-adjoint operators are maximally symmetric. Therefore, $A_\alpha$ is not essentially self-adjoint.\\

Conversely, assume that $A_\alpha(\xi)$ is essentially self-adjoint for a.e. $\xi \in \R$. Then $B_\alpha$ is self-adjoint, and hence $A_\alpha^* \subset B_\alpha$. By taking the adjoint, we obtain $\overline{A_\alpha}=A_\alpha^{**}\supset B_\alpha^* =B_\alpha$, which, together with  $\overline{A_\alpha} \subset B_\alpha$, implies $\overline{A_\alpha} = B_\alpha$. The conclusion is that $A_\alpha$ is essentially self-adjoint. 
\end{proof}

\section{The compactification of $M_\alpha$}

In \cite{Boscain-Prandi-JDE-2016} quantum confinement has been studied for the cylindric version of $M_\alpha$, that is, 

$$\widetilde{M}:= \{(x,y)\in \R \times S^1| x>0\}=\R^+ \times S^1, \qquad \widetilde{M}_\alpha=(\widetilde{M},g_\alpha).$$

In this case, the Hilbert space is $ l^2(\mathbb{Z};L^2(\R^+,dx))$ and the inquired operator is

$$ \widetilde{A_\alpha}= \dfrac{\partial^2}{\partial x^2}-k^2x^{2\alpha}-\dfrac{\alpha(2+\alpha)}{4x^2}, \qquad k \in \mathbb{Z},$$

with domain $\mathcal{F}_2C_c^\infty(\R^+\times S^1)$, where 

$$\mathcal{F}_2(\psi)(x,k):=\dfrac{1}{\sqrt{2\pi}}\int_{S^1} \psi(x,\theta)e^{-ik \theta}d\theta.  $$

Let us introduce the fiber operators 

$$\widetilde{A_\alpha(k)}:= \dfrac{d^2}{dx^2}-k^2x^{2\alpha}-\dfrac{\alpha(2+\alpha)}{4x^2}, \qquad k \in \mathbb{Z},$$

as an operator on $L^2(\R^+,dx)$ with domain $C_c^\infty(\R^+)$. 

Hence the constant fiber direct integral reduces to a direct sum $\widetilde{B_\alpha}:= \bigoplus_{k \in \mathbb{Z}}\overline{\widetilde{A_\alpha(k)}}$

basically due to the discreteness of the spectrum of $\dfrac{\partial^2}{\partial \theta^2}$ as an operator on $L^2(S^1)$, with domain 

$$\mathcal{D}(B_\alpha):=\{(\psi_k)_{k \in \mathbb{Z}}\in l^2(\mathbb{Z};L^2(\R^+,dx) )| \psi_k \in \mathcal{D}(\overline{A_\alpha(k)}) \text{ for every k and } \sum_{k \in \mathbb{Z}} \|\overline{A_\alpha(k)}\psi_k \|_{L^2(\R^+,dx)}^2<\infty \}. $$

Actually in this compactified case one does not need all the previous construction of constant fiber direct integral. Indeed it suffices to define $\widetilde{C_\alpha}:= \bigoplus_{k\in \mathbb{Z}} \widetilde{A_\alpha(k)}$ as an operator on  $ l^2(\mathbb{Z};L^2(\R^+,dx))$ with domain all the sequences $L^2(\R^+,dx)$-valued which are zero except for a finite number of components. Then one has that $\overline{\widetilde{C_\alpha}}=\overline{\widetilde{A_\alpha}}$ and the essential self-adjointness of $\widetilde{C_\alpha}$ is controlled via the following lemma \cite[Proposition 2.3]{Boscain-Prandi-JDE-2016}.
\begin{lemma}\label{compatto}
The deficency indices of $\widetilde{C_\alpha}$ are $n_{\pm}(\widetilde{C_\alpha})=\sum_{k\in \mathbb{Z}} n_{\pm}(\widetilde{A_\alpha(k)})$.
\end{lemma}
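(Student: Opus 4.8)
The plan is to reduce the computation of the deficiency indices of the infinite direct sum $\widetilde{C_\alpha}$ to a fibrewise computation by identifying its adjoint explicitly. Since each fiber operator $\widetilde{A_\alpha(k)}$ is symmetric on $L^2(\R^+,dx)$ with dense domain $C_c^\infty(\R^+)$, and since the finite-support sequences with smooth compactly supported entries are dense in $l^2(\mathbb{Z};L^2(\R^+,dx))$, the operator $\widetilde{C_\alpha}$ is densely defined and symmetric, so its deficiency indices $n_\pm(\widetilde{C_\alpha})=\dim\ker(\widetilde{C_\alpha}^*\mp i)$ are well defined.

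First I would compute the adjoint. Given $\phi=(\phi_k)_k$, membership $\phi \in \mathcal{D}(\widetilde{C_\alpha}^*)$ means there is $\eta=(\eta_k)_k \in l^2(\mathbb{Z};L^2(\R^+,dx))$ with $\langle \widetilde{C_\alpha}\psi,\phi\rangle=\langle\psi,\eta\rangle$ for every finite-support $\psi \in \mathcal{D}(\widetilde{C_\alpha})$. Testing against sequences $\psi$ supported on a single index $k$, with $\psi_k\in C_c^\infty(\R^+)$ arbitrary, yields $\langle \widetilde{A_\alpha(k)}\psi_k,\phi_k\rangle=\langle\psi_k,\eta_k\rangle$, i.e. $\phi_k\in\mathcal{D}(\widetilde{A_\alpha(k)}^*)$ and $\widetilde{A_\alpha(k)}^*\phi_k=\eta_k$. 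Conversely, for any $\phi$ with $\phi_k\in\mathcal{D}(\widetilde{A_\alpha(k)}^*)$ for every $k$ and $(\widetilde{A_\alpha(k)}^*\phi_k)_k\in l^2(\mathbb{Z};L^2(\R^+,dx))$, the pairing $\langle \widetilde{C_\alpha}\psi,\phi\rangle=\sum_k\langle\widetilde{A_\alpha(k)}\psi_k,\phi_k\rangle$ is a finite sum (as $\psi$ has finite support) and equals $\langle\psi,(\widetilde{A_\alpha(k)}^*\phi_k)_k\rangle$. Hence $\widetilde{C_\alpha}^*$ acts componentwise, $(\widetilde{C_\alpha}^*\phi)_k=\widetilde{A_\alpha(k)}^*\phi_k$, on the domain of all $\phi$ with $\phi_k\in\mathcal{D}(\widetilde{A_\alpha(k)}^*)$ for each $k$ and $(\widetilde{A_\alpha(k)}^*\phi_k)_k$ square summable.

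Next I would identify the deficiency subspaces. The equation $\widetilde{C_\alpha}^*\phi=\pm i\phi$ is equivalent to $\widetilde{A_\alpha(k)}^*\phi_k=\pm i\phi_k$ for every $k$, so each component lies in $\ker(\widetilde{A_\alpha(k)}^*\mp i)$. The only remaining constraints are $\sum_k\|\phi_k\|^2<\infty$ (membership in the Hilbert space) and $\sum_k\|\widetilde{A_\alpha(k)}^*\phi_k\|^2<\infty$ (membership in $\mathcal{D}(\widetilde{C_\alpha}^*)$); but on the deficiency subspace $\|\widetilde{A_\alpha(k)}^*\phi_k\|=\|\phi_k\|$, so the second constraint is automatic. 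Therefore $\ker(\widetilde{C_\alpha}^*\mp i)$ is exactly the orthogonal Hilbert-space direct sum $\bigoplus_{k\in\mathbb{Z}}\ker(\widetilde{A_\alpha(k)}^*\mp i)$, and taking dimensions gives $n_\pm(\widetilde{C_\alpha})=\sum_{k\in\mathbb{Z}}\dim\ker(\widetilde{A_\alpha(k)}^*\mp i)=\sum_{k\in\mathbb{Z}}n_\pm(\widetilde{A_\alpha(k)})$, as claimed, the identity being understood as an equality of possibly infinite dimensions.

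The step I expect to be the main obstacle is the characterization of $\mathcal{D}(\widetilde{C_\alpha}^*)$: one must exploit that $\widetilde{C_\alpha}$ is defined only on \emph{finite-support} sequences, which is precisely what forces the adjoint to act componentwise while allowing its domain to contain sequences with infinitely many nonzero entries. The resolution of the apparent $l^2$-summability obstruction for $\widetilde{C_\alpha}^*\phi$ on the deficiency space is the observation $\|\widetilde{A_\alpha(k)}^*\phi_k\|=\|\phi_k\|$ for eigenvectors of $\widetilde{A_\alpha(k)}^*$ with eigenvalue $\pm i$.
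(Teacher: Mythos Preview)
Your argument is correct and is the standard proof of this general fact about countable direct sums of symmetric operators. The paper, however, does not prove this lemma at all: it simply quotes it as \cite[Proposition 2.3]{Boscain-Prandi-JDE-2016} and moves on. So there is no ``paper's own proof'' to compare against; you have supplied what the paper outsources to a reference. Your identification of the adjoint via single-index test sequences (which is exactly where the finite-support domain is used), together with the observation that on $\ker(\widetilde{A_\alpha(k)}^*\mp i)$ one has $\|\widetilde{A_\alpha(k)}^*\phi_k\|=\|\phi_k\|$ so that the $l^2$-summability condition in $\mathcal{D}(\widetilde{C_\alpha}^*)$ is automatic, is precisely the argument one finds in the cited source and in standard treatments of deficiency indices for orthogonal sums.
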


\begin{remark}
Notice that this construction does not make sense for $M_\alpha$, since the domain of $B_\alpha$ made by sequences which are zero except for a finite numbers of components would contain just the function identically zero. Thus, we strongly need the constant fiber direct integral construction in the non-compact case.  
\end{remark}

Owning to the analysis on the fiber operators $\widetilde{A_\alpha(k)}$ via limit-point limit-circle method (see, e.g., \cite[Appendix to X.1]{rs2}), the fact that  $\overline{\widetilde{C_\alpha}}=\overline{\widetilde{A_\alpha}}$ and using Lemma \ref{compatto}, the description of quantum confinement on $\widetilde{M}_\alpha$ can be sum up in the following

\begin{theorem}\label{sum}
$\widetilde{A_\alpha}$ is essentially self-adjoint if and only if $\alpha \in (-\infty,-3]\cup [1,\infty)$. \\
When $\alpha \in (-3,1)$, the lack of self-adjointness is due to:
\begin{itemize}
\item all the Fourier $k$'s components if $\alpha \in (-1,1)$;
\item the only $0^{\text{th}}$ Fourier component if $\alpha \in (-3,-1]$.
\end{itemize}
\end{theorem}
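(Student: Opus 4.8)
The plan is to reduce the essential self-adjointness of $\widetilde{A_\alpha}$ to a fiber-by-fiber analysis, exactly as the paper has set up. By the remark that $\overline{\widetilde{C_\alpha}}=\overline{\widetilde{A_\alpha}}$ together with Lemma \ref{compatto}, the operator $\widetilde{A_\alpha}$ is essentially self-adjoint if and only if $n_\pm(\widetilde{A_\alpha(k)})=0$ for every $k\in\mathbb Z$, i.e.\ every fiber operator $\widetilde{A_\alpha(k)}$ is essentially self-adjoint on $C_c^\infty(\R^+)$. So the entire theorem becomes a statement about the one-dimensional Schrödinger operators $\frac{d^2}{dx^2}-k^2x^{2\alpha}-\frac{\alpha(2+\alpha)}{4x^2}$, which I would analyze by the limit-point/limit-circle (Weyl) method referenced in \cite[Appendix to X.1]{rs2}.

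The key observation is that $\R^+$ has two ends, $0$ and $+\infty$, and essential self-adjointness of a one-dimensional Schrödinger operator holds exactly when the operator is in the limit-point case at \emph{both} ends. First I would dispose of the end at $+\infty$: the potential $k^2x^{2\alpha}$ grows (for $\alpha>0$) or the operator is a bounded perturbation of $\frac{d^2}{dx^2}$ near infinity, and a standard criterion shows limit-point at $+\infty$ for all the relevant $\alpha$; the subtle end is $x=0$, where the potential behaves like $\frac{c}{x^2}$ with $c=-\frac{\alpha(2+\alpha)}{4}$ (the $k^2x^{2\alpha}$ term being subdominant there). For an inverse-square potential $c/x^2$ near the origin, the classical dichotomy is that one is in the limit-point case iff $c\geqslant \frac34$, equivalently the effective coefficient satisfies the threshold $-\frac{\alpha(2+\alpha)}{4}\geqslant\frac34$. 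Solving $-\alpha(2+\alpha)\geqslant 3$, i.e.\ $\alpha^2+2\alpha+3\leqslant 0$ — which has no real solutions — shows the naive threshold must be examined more carefully, so I would instead track the two Frobenius exponents of the indicial equation $s(s-1)=c$ at $x=0$ and test square-integrability of the corresponding solutions $x^{s_\pm}$ near $0$.

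Concretely, the indicial roots are $s_\pm=\frac12\pm\frac12\sqrt{1+4c}=\frac12\pm\frac12|\alpha+1|$, and $x^{s}\in L^2$ near $0$ iff $s>-\frac12$; the limit-point condition at $0$ is that at most one independent solution is $L^2$, which happens precisely when $s_-\leqslant -\frac12$, i.e.\ $|\alpha+1|\geqslant 2$, equivalently $\alpha\geqslant 1$ or $\alpha\leqslant -3$. This pins down the boundary $\alpha\in(-\infty,-3]\cup[1,\infty)$ for essential self-adjointness, and it does so \emph{uniformly in $k$}, because the $k^2x^{2\alpha}$ term does not affect the indicial analysis at the origin. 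For the refined statement in the range $\alpha\in(-3,1)$ I would separate the role of $k$: when $k\neq 0$ the term $-k^2x^{2\alpha}$ is present, but for $\alpha>-1$ it is integrable/subdominant at $0$ and does not restore limit-point, so every Fourier component fails for $\alpha\in(-1,1)$; whereas for $\alpha\in(-3,-1]$ the exponent $2\alpha<-2$ makes $k^2x^{2\alpha}$ dominate the inverse-square term near $0$ for $k\neq 0$, pushing those fibers back into the limit-point case, so that only the $k=0$ component (where this confining term vanishes) remains non-self-adjoint.

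The main obstacle I anticipate is the careful treatment of the borderline and sign-changing behavior of the inverse-square coefficient together with the competing $k^2x^{2\alpha}$ term in the window $\alpha\in(-3,-1]$: deciding limit-point versus limit-circle at $x=0$ there requires a genuine asymptotic analysis of the ODE solutions rather than the pure Frobenius/indicial count, since for $k\neq 0$ the leading singularity is $-k^2x^{2\alpha}$ (a strongly attractive or repulsive singular potential depending on sign conventions) and one must verify that exactly one solution is square-integrable near the origin. I would handle this by a WKB or Liouville-transformation argument near $0$, reducing to a known oscillation/non-oscillation criterion, and I would double-check the $\alpha=-1$ endpoint (where $|\alpha+1|=0$ gives the logarithmically resonant indicial case) separately. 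Once the per-fiber classification is established, assembling it through Lemma \ref{compatto} yields the stated equivalence and the componentwise description of the failure of self-adjointness.
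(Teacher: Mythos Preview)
Your strategy is exactly the one the paper outlines for Theorem~\ref{sum}: reduce via $\overline{\widetilde{C_\alpha}}=\overline{\widetilde{A_\alpha}}$ and Lemma~\ref{compatto} to the fibers, then run limit-point/limit-circle at $0$ and $\infty$. The paper itself does not supply a proof here but cites \cite[Theorem~1.6]{Boscain-Prandi-JDE-2016}, so there is no ``paper's proof'' to contrast with beyond that outline; your plan faithfully implements it and the resulting classification ($|\alpha+1|\geqslant 2$ for limit-point at $0$ when the inverse-square term dominates; the $k^2x^{2\alpha}$ term taking over for $k\neq 0$ when $2\alpha<-2$) is correct.

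Two small remarks. First, there is a sign slip: with your convention the indicial equation is $s(s-1)=\frac{\alpha(2+\alpha)}{4}$, not $s(s-1)=c$ with $c=-\frac{\alpha(2+\alpha)}{4}$; your computed roots $s_\pm=\tfrac12\pm\tfrac12|\alpha+1|$ are nevertheless the right ones, so this is cosmetic. Second, the WKB/Liouville step you anticipate for $\alpha\in(-3,-1]$, $k\neq0$, is not needed: since $2\alpha<-2$ one has $k^2x^{2\alpha}\geqslant \tfrac{3}{4x^2}$ near $0$, which already forces limit-point by the standard inverse-square comparison. Alternatively, and this is what the paper does in its own proof of the analogous plane-case theorem, one can conjugate back by $U_\alpha$ and solve the transformed ODE explicitly (exponentials $\exp(\pm k x^{1+\alpha}/(1+\alpha))$ for $\alpha\neq-1$, powers $x^{\pm k}$ for $\alpha=-1$), which makes the $L^2$-tests at $0$ and $\infty$ completely elementary and covers the borderline $\alpha=-1$ without a separate resonance argument.
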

In some sense, for $\alpha\in (-3,-1]$, the only information which escapes from $\R^+\times S^1$ is the average over $S^1$ of the wave function, that is, $\psi_0(x)=\dfrac{1}{2\pi}\int_{S^1}\psi(x,\theta)d\theta.$ For a proof of Theorem \ref{sum}, see \cite[Theorem 1.6]{Boscain-Prandi-JDE-2016}.

\section{Quantum confinement on $M_\alpha$}
We now develop explicitly the analysis of self-adjointness via limit-point limit-circle method on the fiber space $L^2(\R^+,dx)$ and then we integrate the result using Theorem \ref{cruciale} to obtain the characterization of quantum confinement on $M_\alpha$ depending on $\alpha$. We shall see how the constant fiber direct integral construction differs from a direct sum, obtaining indeed different regimes of $\alpha$, with respect to Theorem \ref{sum}.

\begin{theorem}
$A_\alpha$ is essentially self-adjoint if and only if $\alpha \in (-\infty,-1)\cup [1,\infty)$. \\
When $\alpha \in [-1,1)$, the lack of self-adjointness is due to:
\begin{itemize}
\item all the Fourier modes $\xi \in \R$ if $\alpha \in (-1,1)$;
\item the Fourier modes $\xi \in (-1,1)$ if $\alpha =-1$.
\end{itemize}
\end{theorem}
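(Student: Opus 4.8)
The plan is to invoke Theorem~\ref{cruciale}, which reduces the essential self-adjointness of $A_\alpha$ to that of the fiber operators $A_\alpha(\xi)$ for a.e.\ $\xi\in\R$, and then to classify each fiber operator by Weyl's limit-point/limit-circle theory on the half-line $\R^+$. Each $A_\alpha(\xi)$ is, up to sign, a one-dimensional Schr\"odinger operator $-\frac{d^2}{dx^2}+V_\alpha(x,\xi)$ with $V_\alpha(x,\xi)=\xi^2x^{2\alpha}+\frac{\alpha(2+\alpha)}{4x^2}$, and it is essentially self-adjoint on $C_c^\infty(\R^+)$ if and only if both endpoints $x=0$ and $x=\infty$ are in the limit-point case (see \cite[Appendix to X.1]{rs2}). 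The whole argument therefore splits into the analysis of the two endpoints.

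First I would dispose of the endpoint at infinity. There $V_\alpha(\cdot,\xi)$ is bounded below for every $\alpha$ and $\xi$: it is nonnegative when $\alpha\geq0$, and when $\alpha<0$ both of its terms are bounded near infinity (the first tends to $0$, the second decays like $x^{-2}$). A potential bounded below near $\infty$ places that endpoint in the limit-point case, so the classification is governed entirely by the behaviour at $x=0$.

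The heart of the matter is the endpoint $x=0$, where the leading singularity of $V_\alpha(x,\xi)$ is an inverse-square potential $c/x^2$. The key computation is the indicial analysis of $-u''+\frac{c}{x^2}u=0$: its Frobenius exponents are $s_\pm=\frac12\bigl(1\pm\sqrt{1+4c}\,\bigr)$, and the solution $x^{s_-}$ fails to be square-integrable near $0$ precisely when $c\geq\frac34$; hence $0$ is limit-point iff $c\geq\frac34$ and limit-circle iff $c<\frac34$. I then read off the effective coefficient $c$ in each regime. For $\alpha>-1$ the term $\xi^2x^{2\alpha}$ is subleading (its exponent satisfies $2\alpha>-2$) and does not alter the $L^2$ nature of solutions near $0$, so $c=\frac{\alpha(2+\alpha)}{4}$ independently of $\xi$, and $c\geq\frac34\iff(\alpha-1)(\alpha+3)\geq0\iff\alpha\geq1$ within $\alpha>-1$; thus $0$ is limit-point iff $\alpha\geq1$, and limit-circle for every $\xi$ when $-1<\alpha<1$ (the value $\alpha=0$ being the regular, constant-potential endpoint, still limit-circle). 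For $\alpha=-1$ the two terms coincide, giving $V_{-1}(x,\xi)=(\xi^2-\tfrac14)x^{-2}$, so $c=\xi^2-\frac14$ and $0$ is limit-point iff $\xi^2\geq1$, i.e.\ limit-circle exactly for $\xi\in(-1,1)$. For $\alpha<-1$ and $\xi\neq0$ the term $\xi^2x^{2\alpha}$ dominates with $2\alpha<-2$, a strongly repulsive singularity that is limit-point at $0$.

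Finally I would assemble the cases through Theorem~\ref{cruciale}. For $\alpha\geq1$ every fiber is limit-point at both ends, so each $A_\alpha(\xi)$ is essentially self-adjoint and hence so is $A_\alpha$; for $\alpha<-1$ the same holds for a.e.\ $\xi$ (all $\xi\neq0$), so $A_\alpha$ is again essentially self-adjoint. For $-1<\alpha<1$ every fiber is limit-circle at $0$, so the non-self-adjoint set is all of $\R$, while for $\alpha=-1$ it is exactly the positive-measure set $(-1,1)$; in both cases Theorem~\ref{cruciale} gives that $A_\alpha$ is not essentially self-adjoint, with the stated Fourier modes responsible. The main obstacle is precisely the endpoint analysis at $0$: one must justify rigorously that for $\alpha>-1$ the subleading term $\xi^2x^{2\alpha}$ leaves the limit-point/limit-circle classification unchanged, and treat the borderline coefficient $c=\frac34$ together with the degenerate indicial case $1+4c=0$ (logarithmic solutions) by exhibiting the actual $L^2$ behaviour of solutions. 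This is also the exact point at which the direct-integral picture (with $\xi\in\R$ and $\{\xi=0\}$ negligible) diverges from the direct-sum picture (with $k\in\mathbb{Z}$ and $k=0$ an atom), which accounts for the gap between the regime $(-\infty,-1)$ obtained here and the regime $(-\infty,-3]$ of Theorem~\ref{sum}.
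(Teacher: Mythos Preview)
Your proposal is correct and reaches the same conclusions, but the route differs from the paper's. The paper first applies the unitary map $\psi\mapsto x^{\alpha/2}\psi$ to pass from $L^2(\R^+,dx)$ to $L^2(\R^+,x^{-\alpha}dx)$, where the fiber eigenvalue-zero equation becomes $\psi''-\xi^2x^{2\alpha}\psi-\tfrac{\alpha}{x}\psi'=0$. This ODE is \emph{exactly} solvable for every $\alpha,\xi$: for $\xi\neq0$, $\alpha\neq-1$ the solutions are $\exp\bigl(\pm\tfrac{\xi x^{1+\alpha}}{1+\alpha}\bigr)$; for $\xi=0$ they are $1$ and $x^{1+\alpha}$; and for $\alpha=-1$ they are $x^{\pm\xi}$ (or $1,\log x$ when $\xi=0$). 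The $L^2$-membership near $0$ and $\infty$ is then read off directly against the weighted measure $x^{-\alpha}dx$, with no asymptotic comparison needed. Your approach instead stays in the unweighted space, isolates the dominant inverse-square coefficient $c$ at $x=0$, and applies the standard criterion $c\geq\tfrac34$ together with a boundedness-below argument at infinity. The paper's explicit-solution method sidesteps exactly the technical point you flag as the main obstacle---no perturbation lemma is needed to discard the subleading term $\xi^2x^{2\alpha}$, and the borderline and logarithmic cases are handled automatically---while your method is the textbook Schr\"odinger-operator approach and generalises more readily to potentials that are not exactly solvable.
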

It is worth mentioning that, when $\alpha \in (-3,-1)$, the only Fourier mode which is not essentially self-adjoint corresponds to $\xi=0$, but this does not affect the self-adjointness of $B_\alpha$, because $\{0\}$ has zero measure and hence it does not affect the direct integral operator. We remark that the self-adjointness of $A_\alpha$, when $\alpha \geq 0$, was already classified in \cite[Corollary 3.8]{GMP-Grushin-2018}. Here we study also the case $\alpha<0$.
\begin{proof}
Thanks to Theorem \ref{cruciale}, $A_\alpha$ is essentially self-adjoint if and only if  $A_\alpha(\xi)$ is essentially self-adjoint as an operator on $L^2(\R^+,dx)$, for a.e. $\xi \in \R$. Now the self-adjointness of each $A_\alpha(\xi)$, $\xi \in \R$, can be studied by computing its deficency indices. Notice that  $-A_\alpha(\xi)$ is a Sturm-Liouville operator (see, e.g., \cite[Section 15.1]{schmu_unbdd_sa}), i.e., it has the form
$$ A_\alpha(\xi)= \dfrac{d^2}{dx^2}+V_{\alpha,\xi}, \qquad V_{\alpha,\xi}:= -\xi^2x^{2\alpha}-\dfrac{\alpha(2+\alpha)}{4x^2}.$$
Thus, by Weyl Theorem \cite[Theorem X.7]{rs2}, we can compute its deficency indices by checking if $V_{\alpha,\xi}$ is limit-point or limit-circle in $0$ and $\infty$. \\
Hence, we have to study the behaviour of the two linearly independent solutions of
\begin{equation}
\psi''(x)-\xi^2x^{2\alpha}\psi(x)-\dfrac{\alpha(2+\alpha)}{4x^2}\psi(x)=0
\end{equation}
near $0$ and $\infty$, that is, checking whether or not they are $L^2((0,1),dx)$ and $L^2((1,\infty),dx) $. \\

By applying a unitary transformation $U_\alpha:L^2(\R^+,dx) \Rightarrow L^2\Big(\R^+,\dfrac{1}{x^\alpha}dx\Big)$, $\psi \mapsto \dfrac{1}{x^{-\alpha/2}}\psi,$ this is equivalent to studying whether or not the two linearly independent solutions of

\begin{equation}\label{equazione}
\psi''(x)-\xi^2x^{2\alpha}\psi(x)-\dfrac{\alpha}{x}\psi'(x)=0
\end{equation}
are $L^2\Big((0,1),\dfrac{1}{x^\alpha}dx\Big)$ and $L^2\Big((1,\infty),\dfrac{1}{x^\alpha}dx\Big) $. \\
Let us start with $\xi=0$ and $\alpha \neq -1$, then the solutions to (\ref{equazione}) are $\psi_1=1$ and $\psi_2(x)=x^{1+\alpha}.$ They are both in $L^2\Big((0,1),\dfrac{1}{x^\alpha}dx\Big)$ if and only if $\alpha \in (-3,1)$. Moreover, $\psi_1,\psi_2 \notin L^2\Big((1,\infty),\dfrac{1}{x^\alpha}dx\Big) $ for every $\alpha\in \R$. Analogously, for $\xi=0$ and $\alpha=-1$ the solutions to (\ref{equazione}) are $\psi_1=1,\psi_2(x)=log(x)$, and both of them belong to $L^2\Big((0,1),x dx\Big)$, while no one of them belongs to $L^2\Big((1,\infty),x dx\Big) $. 
Thus, \begin{itemize}
\item if $\alpha \in (-3,1)$, then $V_{\alpha,0}$ is limit-circle in $0$ and limit-point in $\infty$;
\item if $\alpha \notin (-3,1)$, then $V_{\alpha,0}$ is limit-point in $0$ and $\infty$.
\end{itemize}
Consider now $\xi \neq 0$, and as before let us start with $\alpha \neq -1$. Then the solutions to (\ref{equazione}) are $\psi_1(x)=exp\Big(\dfrac{\xi x^{1+\alpha}}{1+\alpha}\Big),\psi_2(x)=exp\Big(-\dfrac{\xi x^{1+\alpha}}{1+\alpha}\Big) $. If $\alpha >-1$, $\psi_1$ and $\psi_2$ are bounded near $0$, and hence $0$ is limit-circle when $\alpha < 1$. One of them grows exponentially at $\infty$, and hence $\infty$ is always limit-point for $\alpha >-1$.\\
If $\alpha<-1$, $\psi_1$ and $\psi_2$ converge to $1$ as $x\Rightarrow \infty$ and the measure $\dfrac{1}{x^\alpha}dx$ explodes, and hence $\infty$ is always limit-point when $\alpha<-1$, while one of them diverges exponentially in $0$, and hence $0$ is always limit-point when $\alpha<-1$. \\
If $\alpha=-1$, the solutions of (\ref{equazione}) are $\psi_1(x)=x^\xi$, $\psi_2(x)=x^{-\xi}$. One of them diverges when $x \Rightarrow \infty$ and the measure $x dx$ diverges, hence $\infty$ is limit-point, while $\psi_1,\psi_2 \in L^2((0,1),xdx)$ if and only if $\xi \in (-1,1)$, that is, $0$ is limit-circle if $\xi \in (-1,1)$ and limit-point if $\xi \notin (-1,1)$.\\
Summing up, and using that $A_\alpha(\xi)$ is essentially self-adjoint if and only if $0$ and $\infty$ are limit-point for $V_{\alpha,\xi}$, that is, the Weyl Theorem, we obtain
\begin{itemize}
\item if $\alpha \geq 1$, then $A_\alpha(\xi)$ is essentially self-adjoint, for every $\xi \in \R$;
\item if $-1<\alpha<1$, then $A_\alpha(\xi)$ is not essentially self-adjoint, for every $\xi \in \R$;
\item if $\alpha=-1$, then $A_\alpha(\xi)$ is not essentially self-adjoint for $\xi \in (-1,1)$ and it is essentially self-adjoint if $\xi \notin (-1,1)$;
\item if $-3<\alpha<-1$, then $A_\alpha(0)$ is not essentially self-adjoint and $A_\alpha(\xi)$ is essentially self-adjoint for every $\xi \neq 0$;
\item if $\alpha \leq -3$, then $A_\alpha(\xi)$ is essentially self-adjoint for every $\xi \in \R$.  
\end{itemize}\end{proof}

\textbf{Acknowledgement} The project leading to this publication has received funding from the European Union’s Horizon 2020 research and innovation programme under the Marie Sklodowska-Curie grant agreement no. 765267 (QuSCo). Also it was supported by the ANR projects SRGI ANR-15-CE40-0018 and Quaco ANR-17-CE40-0007-01.

\bibliographystyle{amsalpha}
%\nocite{*} SE METTI QUESTO COMANDO ALLORA NELLA BIBLIOGRAFIA VIENE 
%MESSO TUTTO,ANCHE QUELLO CHE NON CITI
\bibliography{references}

\end{document}